\documentclass[11pt]{article}

\usepackage[margin=1in]{geometry}
\usepackage{amsmath,amssymb,amsthm,mathtools}
\usepackage{booktabs}
\usepackage{enumitem}
\usepackage{array}
\usepackage{microtype}
\usepackage{tikz}
\usetikzlibrary{arrows.meta,positioning}
\usepackage[hidelinks]{hyperref}
\hypersetup{
  pdftitle={When Is Forgetting Provenance Lawful? Endpoint Sufficiency and Behavioral Quotients of Generative Systems},
  pdfauthor={David Carr},
  pdfsubject={Endpoint projections, trace sufficiency, bisimulation, quotientability, and provenance-sensitive generative systems}
}

\newtheorem{theorem}{Theorem}[section]
\newtheorem{proposition}[theorem]{Proposition}
\newtheorem{corollary}[theorem]{Corollary}

\theoremstyle{definition}
\newtheorem{definition}[theorem]{Definition}
\newtheorem{example}[theorem]{Example}
\theoremstyle{remark}
\newtheorem{remark}[theorem]{Remark}

\newcommand{\Tr}{\operatorname{Tr}}
\newcommand{\En}{\operatorname{En}}
\newcommand{\Post}{\operatorname{Post}}
\newcommand{\eu}{\equiv_{\mathsf U}}
\newcommand{\etr}{\equiv_{\mathrm{tr}}}

\newcommand{\step}[1]{\xrightarrow{#1}}
\newcommand{\steps}[1]{\xRightarrow{#1}}

\newcommand{\quot}{\mathord{/}}

\title{When Is Forgetting Provenance Lawful?\\
Endpoint Sufficiency and Behavioral Quotients of Generative Systems}
\author{David Carr}
\date{July 2026\\\small Working draft 0.2}

\begin{document}
\maketitle

\begin{abstract}
A provenance-decorated generative system may contain distinct occurrences with the same visible endpoint. The central abstraction question is therefore exact: when may provenance be forgotten without changing the lawful future? We study this question through the labeled transition system induced by lawful generation and an endpoint projection $\mathsf U$ from occurrences to visible objects.

Three observation levels are separated. Enabled sufficiency preserves immediately available rule labels; trace sufficiency preserves all finite lawful rule-label traces; quotient sufficiency preserves the branching transition structure modulo endpoint equivalence. The resulting hierarchy is strict. At the linear-time level, endpoint trace sufficiency is exactly inclusion of endpoint equivalence in finite-trace equivalence. At the branching level, the canonical endpoint quotient is representative-independent exactly when endpoint equivalence is a strong bisimulation equivalence.

Two canonical repairs follow. Intersecting endpoint equivalence with trace equivalence gives the greatest endpoint-respecting relation preserving finite traces. The greatest endpoint-respecting bisimulation gives the maximally coarse exact branching quotient and is universal among endpoint-respecting exact quotients. For finite systems it is computed by a terminating partition-refinement procedure initialized by endpoint classes. A self-contained application to provenance-decorated nested recursive-recombinant generation exhibits two occurrences with the same visible graph $I\to A\to B$ but different enabled futures; the refinement procedure separates them in its first round. The result replaces an all-or-nothing demand to retain provenance with an exact criterion for retaining only the distinctions that remain behaviorally operative.
\end{abstract}

\section{Introduction}

A generative object can be represented at several levels. One representation records only its visible endpoint. Another records the occurrence that produced that endpoint, including ancestry, merge events, dependency structure, or other provenance. The endpoint representation is smaller. It is also potentially false as a state description.

The issue arises whenever admissibility depends on lineage. Two occurrences may project to isomorphic visible objects while admitting different lawful next moves. A previous provenance-decorated nested recursive-recombinant generative (NRRG) framework supplied a concrete witness: two pointed typed graphs with the same endpoint shape had different recombinant ancestry and therefore different permissions for a boundary-touching rewrite \cite{carr2026nrrg}. That result established failure. The present paper asks for the exact success criterion:

\begin{quote}
When does an endpoint projection retain enough information to determine lawful future behavior, and what is the largest amount of provenance that can safely be forgotten?
\end{quote}

The question requires one correction to the informal slogan that motivated it. ``Endpoint sufficiency'' can mean at least three different things. It may require only the same immediately enabled rule labels. It may require the same finite trace language. Or it may require the same branching transition structure after quotienting by endpoint. These are not equivalent. In standard process semantics, trace equivalence is weaker than bisimulation because linear-time observations can ignore branching distinctions \cite{park1981,milner1989,vanglabbeek2001,rutten2000}.

This paper therefore separates the levels and proves the corresponding criteria. Its main contributions are:

\begin{enumerate}[label=(\arabic*)]
    \item a formulation of endpoint sufficiency relative to a fixed lawful-transition encoding, endpoint projection, label vocabulary, and observation level;
    \item a strict three-level hierarchy separating enabled, finite-trace, and exact branching sufficiency;
    \item an exact quotient criterion: the endpoint quotient is representative-independent if and only if endpoint equivalence is a strong bisimulation equivalence;
    \item two canonical repairs: the greatest endpoint-respecting trace-safe refinement and the greatest endpoint-respecting bisimulation;
    \item a universal maximal-forgetting property showing that every exact endpoint-respecting quotient refines the canonical branching quotient;
    \item a finite partition-refinement procedure computing that quotient from endpoint classes; and
    \item a self-contained NRRG witness showing in one diagram why endpoint-isomorphic occurrences can require an immediate behavioral split.
\end{enumerate}

The transition-system tools used below are standard: trace equivalence, bisimulation, coinduction, and partition refinement have mature theories. The contribution here is not a reinvention of those tools. It is their constrained deployment at the endpoint--provenance boundary of generative systems: the problem formulation, the separation of three non-equivalent sufficiency claims, and the derivation of a maximal lawful-forgetting quotient that is required to respect visible endpoints.

The paper does not claim that all provenance must be retained. Its conclusion is more discriminating. Provenance distinctions may be forgotten exactly to the extent that they are behaviorally inert relative to the chosen observation level.

\section{Lawful Generative Transition Systems}

The results apply to any system whose lawful one-step continuations can be represented as labeled transitions.

\begin{definition}[Lawful generative transition system]
A lawful generative transition system is a tuple
\[
\mathcal T=(S,\Sigma,\longrightarrow,\mathsf U,X,\cong_X),
\]
where:
\begin{enumerate}[label=(\arabic*)]
    \item $S$ is a set of provenance-decorated generated occurrences;
    \item $\Sigma$ is a set of rule labels;
    \item $\longrightarrow\;\subseteq S\times\Sigma\times S$ is the lawful labeled transition relation;
    \item $\mathsf U:S\to X$ is an endpoint projection into a set $X$ of visible objects; and
    \item $\cong_X$ is an equivalence relation representing visible endpoint isomorphism.
\end{enumerate}
We write $s\step{a}t$ when $(s,a,t)\in\longrightarrow$.
\end{definition}

The transition relation is already restricted to lawful generation. An admissibility law $\tau$ therefore appears indirectly, by determining which proposed generation events become edges.

\begin{remark}[Recursive and recombinant generation]
A recursive event naturally gives a unary transition. A recombinant event has several parents. To obtain an ordinary labeled transition system, choose a focal parent and treat the remaining parents as an environment: $s\step{C_i}t$ means that $s$ occurs in parent position $i$ of some lawful recombinant event producing $t$. The label may record only the rule schema, or may additionally record parent role, partner type, or partner endpoint. If partner identity itself must remain explicit, one may instead use a hypertransition system; that extension is left open.
\end{remark}

\begin{remark}[Sufficiency is relative to an encoding]
No sufficiency judgment below is absolute. It is relative to the chosen state space $S$, transition encoding $\longrightarrow$, label vocabulary $\Sigma$, endpoint projection $\mathsf U$, endpoint isomorphism $\cong_X$, and observation level. Refining labels or moving recombinant-partner data from the environment into the state can destroy a previously valid quotient; coarsening the observation may create one. The focal-parent convention is therefore a declared modeling choice, not a theorem about all multi-parent semantics. Once these choices are fixed, the criteria below are exact.
\end{remark}

\begin{definition}[Endpoint equivalence]
The endpoint equivalence induced by $\mathsf U$ is
\[
s\eu t
\quad\Longleftrightarrow\quad
\mathsf U(s)\cong_X\mathsf U(t).
\]
Its equivalence classes are the visible endpoint classes.
\end{definition}

\subsection{Paths, traces, and enabled labels}

For $w=a_1\cdots a_n\in\Sigma^*$, write $s\steps{w}t$ when there exist states
\[
s=s_0\step{a_1}s_1\step{a_2}\cdots\step{a_n}s_n=t.
\]
For the empty word, $s\steps{\varepsilon}s$.

\begin{definition}[Finite trace language and enabled set]
For $s\in S$, define
\[
\Tr(s)=\{w\in\Sigma^*: \exists t\in S,\ s\steps{w}t\}
\]
and
\[
\En(s)=\{a\in\Sigma: \exists t\in S,\ s\step{a}t\}.
\]
Two states are finite-trace equivalent, written $s\etr t$, when $\Tr(s)=\Tr(t)$.
\end{definition}

The trace language records which finite rule-label sequences can occur. It does not record where a branch occurs, how many distinct successors realize a trace, or whether two traces share an initial continuation.

\section{Three Levels of Endpoint Sufficiency}

\begin{definition}[Enabled sufficiency]
The endpoint projection $\mathsf U$ is \emph{enabled-sufficient} when
\[
s\eu t\quad\Longrightarrow\quad \En(s)=\En(t).
\]
\end{definition}

\begin{definition}[Trace sufficiency]
The endpoint projection $\mathsf U$ is \emph{trace-sufficient} when
\[
s\eu t\quad\Longrightarrow\quad \Tr(s)=\Tr(t).
\]
\end{definition}

To define the branching criterion, let $E$ be any equivalence relation on $S$. For $a\in\Sigma$, define the set of successor $E$-classes
\[
\Post_{E,a}(s)=\{[t]_E: s\step{a}t\}.
\]

\begin{definition}[Quotient sufficiency]
The endpoint projection $\mathsf U$ is \emph{quotient-sufficient} when, for every $s\eu t$ and every $a\in\Sigma$,
\[
\Post_{\eu,a}(s)=\Post_{\eu,a}(t).
\]
Equivalently, the successor endpoint classes available under each label depend only on the current endpoint class and not on the chosen occurrence representative.
\end{definition}

This is the strongest of the three notions. It retains the branching structure modulo visible endpoint equivalence.

\section{Trace Sufficiency and Its Canonical Repair}

\begin{proposition}[Trace criterion]
The endpoint projection $\mathsf U$ is trace-sufficient if and only if
\[
\eu\;\subseteq\;\etr.
\]
\end{proposition}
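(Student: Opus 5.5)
This is essentially definitional, so the plan is to unfold both sides and observe that they say the same thing. Relations on $S$ are identified with their sets of ordered pairs, so $\eu\subseteq\etr$ means: for all $s,t\in S$, if $s\eu t$ then $s\etr t$. By the definition of finite-trace equivalence, $s\etr t$ holds exactly when $\Tr(s)=\Tr(t)$. Trace sufficiency, by its definition, is the statement that $s\eu t$ implies $\Tr(s)=\Tr(t)$ for all $s,t$. I would therefore prove the two directions by rewriting each condition into the other through these two definitions.

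For the forward direction, assume $\mathsf U$ is trace-sufficient. Take an arbitrary pair $(s,t)\in\eu$. Trace sufficiency gives $\Tr(s)=\Tr(t)$, which is exactly $s\etr t$, so $(s,t)\in\etr$. Hence $\eu\subseteq\etr$.

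For the converse, assume $\eu\subseteq\etr$ and let $s\eu t$. Then $(s,t)\in\etr$, so $\Tr(s)=\Tr(t)$ by definition. This is the trace-sufficiency condition.

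There is no substantive obstacle. The only point needing care is bookkeeping about what the two relations are: $\eu$ is built from $\mathsf U$ and $\cong_X$, while $\etr$ is built from the transition relation. The inclusion is a plain set-theoretic inclusion of subsets of $S\times S$, and no closure or equivalence-relation properties are used. I would add a remark that the same argument shows enabled sufficiency is equivalent to $\eu\subseteq\equiv_{\En}$, where $s\equiv_{\En}t$ means $\En(s)=\En(t)$. This remark prepares for the canonical repair $\eu\cap\etr$ discussed next.
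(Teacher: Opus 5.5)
Your proof is correct and matches the paper's: both unfold the definitions of trace sufficiency and of $\etr$ as equality of trace languages, and observe that the resulting condition is exactly the inclusion of relations. Your side remark about enabled sufficiency is a harmless aside that the proposition does not need.
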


\begin{proof}
By definition, trace sufficiency asserts that whenever $s\eu t$, the trace languages of $s$ and $t$ are equal. Equality of trace languages is exactly $s\etr t$. Thus the assertion is equivalent to inclusion of relations.
\end{proof}

Although elementary, the proposition identifies the correct exact criterion at the linear-time level. It also yields a canonical repair.

\begin{definition}[Canonical trace-safe refinement]
Define
\[
E^{\mathrm{tr}}_{\mathsf U}=\eu\cap\etr.
\]
\end{definition}

\begin{proposition}[Coarsest endpoint-respecting trace-safe refinement]
$E^{\mathrm{tr}}_{\mathsf U}$ is the greatest equivalence relation $R$ satisfying both:
\begin{enumerate}[label=(\arabic*)]
    \item $R\subseteq\eu$; and
    \item $sRt$ implies $\Tr(s)=\Tr(t)$.
\end{enumerate}
\end{proposition}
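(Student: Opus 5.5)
The plan has two parts. First, $E^{\mathrm{tr}}_{\mathsf U}$ is itself an admissible candidate. Second, every admissible candidate is contained in it. "Greatest" is understood with respect to inclusion of relations on $S$.

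\emph{Admissibility.} Both $\eu$ and $\etr$ are equivalence relations. For $\eu$, this holds because it is the pullback of the equivalence $\cong_X$ along $\mathsf U$: reflexivity, symmetry and transitivity transfer pointwise from $\cong_X$. For $\etr$, it is the kernel of the map $s\mapsto\Tr(s)$, since it is defined by equality of trace languages. An intersection of two equivalence relations is again an equivalence relation, because each of the three axioms is preserved under intersection. Condition (1) holds because $E^{\mathrm{tr}}_{\mathsf U}\subseteq\eu$ trivially. Condition (2) holds because $s\,E^{\mathrm{tr}}_{\mathsf U}\,t$ gives $s\etr t$, which is by definition $\Tr(s)=\Tr(t)$.

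\emph{Maximality.} Let $R$ be any equivalence relation satisfying (1) and (2), and suppose $sRt$. By (1), $s\eu t$. By (2), $\Tr(s)=\Tr(t)$, that is, $s\etr t$. Hence $s\,E^{\mathrm{tr}}_{\mathsf U}\,t$, so $R\subseteq E^{\mathrm{tr}}_{\mathsf U}$. Combined with admissibility, $E^{\mathrm{tr}}_{\mathsf U}$ is the greatest element of the family, and in particular that family has a greatest element.

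There is no real obstacle here. The only point needing care is making explicit that $\etr$ is an equivalence relation, so that the intersection qualifies as a candidate at all. One could also remark that the equivalence requirement on $R$ is not restrictive: any relation satisfying (1) and (2) is already contained in $E^{\mathrm{tr}}_{\mathsf U}$. As a consistency check, the Trace criterion proposition is recovered as the special case $E^{\mathrm{tr}}_{\mathsf U}=\eu$.
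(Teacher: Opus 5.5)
Your proof is correct and is essentially the paper's argument: $\eu\cap\etr$ is an equivalence relation satisfying (1) and (2), and any $R$ satisfying (1) and (2) has every pair in both $\eu$ and $\etr$, hence $R\subseteq E^{\mathrm{tr}}_{\mathsf U}$. Your additional remarks are accurate elaborations that do not change the route: $\eu$ is a pullback of $\cong_X$, $\etr$ is the kernel of $s\mapsto\Tr(s)$, the equivalence hypothesis on $R$ is not needed for the containment, and the trace criterion is recovered when $E^{\mathrm{tr}}_{\mathsf U}=\eu$.
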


\begin{proof}
Both $\eu$ and $\etr$ are equivalence relations, so their intersection is an equivalence relation. It plainly satisfies the two conditions. If $R$ also satisfies them, then every pair in $R$ belongs to both $\eu$ and $\etr$, hence $R\subseteq E^{\mathrm{tr}}_{\mathsf U}$.
\end{proof}

Thus $E^{\mathrm{tr}}_{\mathsf U}$ identifies exactly those endpoint-equivalent occurrences whose finite lawful futures are indistinguishable as rule-label traces.

\section{The Exact Quotient Criterion}

We recall the standard branching-time notion needed for an exact endpoint quotient \cite{park1981,milner1989,rutten2000}.

\begin{definition}[Strong bisimulation]
A relation $R\subseteq S\times S$ is a strong bisimulation when, for every $(s,t)\in R$ and every $a\in\Sigma$:
\begin{enumerate}[label=(\arabic*)]
    \item if $s\step{a}s'$, then there exists $t\step{a}t'$ with $(s',t')\in R$;
    \item if $t\step{a}t'$, then there exists $s\step{a}s'$ with $(s',t')\in R$.
\end{enumerate}
\end{definition}

\begin{theorem}[Exact quotient criterion]
For the endpoint equivalence $\eu$, the following are equivalent:
\begin{enumerate}[label=(\roman*)]
    \item $\mathsf U$ is quotient-sufficient;
    \item $\eu$ is a strong bisimulation equivalence;
    \item for every endpoint class $A\in S\quot\eu$ and label $a$, the set
    \[
    \{B\in S\quot\eu: s\step{a}t\text{ for some }t\in B\}
    \]
    is independent of the representative $s\in A$.
\end{enumerate}
\end{theorem}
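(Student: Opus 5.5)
The plan is to prove the cycle (i)$\Leftrightarrow$(iii) and (i)$\Leftrightarrow$(ii). Each step is an unfolding of definitions. The one substantive point is the passage between the existential matching in the bisimulation clauses and equality of successor class sets.

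For (i)$\Leftrightarrow$(iii), I would observe that the set in (iii) is literally $\Post_{\eu,a}(s)$. A class $B$ lies in it exactly when some $a$-successor $t$ of $s$ has $[t]_{\eu}=B$. Independence of the representative $s\in A$ therefore says that $\Post_{\eu,a}(s)=\Post_{\eu,a}(s')$ whenever $s,s'\in A$, that is, whenever $s\eu s'$. This is the definition of quotient sufficiency, so (i) and (iii) coincide after this identification.

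For (i)$\Rightarrow$(ii), I would first note that $\eu$ is an equivalence relation, since it is the pullback of the equivalence $\cong_X$ along $\mathsf U$. Then I would take $s\eu t$ and a transition $s\step{a}s'$. We have $[s']_{\eu}\in\Post_{\eu,a}(s)=\Post_{\eu,a}(t)$, so there is a transition $t\step{a}t'$ with $[t']_{\eu}=[s']_{\eu}$, that is, $s'\eu t'$. The symmetric clause follows the same way, or from symmetry of $\eu$. For (ii)$\Rightarrow$(i), I would take $s\eu t$ and $B\in\Post_{\eu,a}(s)$, witnessed by $s\step{a}s'$ with $[s']_{\eu}=B$. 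Clause (1) of the bisimulation definition supplies $t\step{a}t'$ with $s'\eu t'$, hence $B=[t']_{\eu}\in\Post_{\eu,a}(t)$. Clause (2) gives the reverse inclusion.

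I expect no real obstacle. The only care points are stating explicitly that $\eu$ is an equivalence, so that the phrase ``bisimulation equivalence'' in (ii) is meaningful and the classes in (iii) partition $S$, and using transitivity of $\eu$ when identifying classes, so that $[s']=[t']$ holds iff $s'\eu t'$.
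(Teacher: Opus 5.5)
Your proposal is correct and follows essentially the same route as the paper: (i) and (iii) are identified as the same representative-independence statement, (i)$\Rightarrow$(ii) extracts a matching $a$-successor from the equality $\Post_{\eu,a}(s)=\Post_{\eu,a}(t)$, and (ii)$\Rightarrow$(i) turns the two bisimulation clauses into the two inclusions. Your explicit remark that $\eu$ is an equivalence, being the pullback of $\cong_X$ along $\mathsf U$, is a small point of care that the paper leaves implicit.
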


\begin{proof}
Conditions (i) and (iii) are the same representative-independence statement written in state and quotient notation.

Assume (i). Let $s\eu t$ and $s\step{a}s'$. Then $[s']_{\eu}\in\Post_{\eu,a}(s)$. Quotient sufficiency gives
\[
\Post_{\eu,a}(s)=\Post_{\eu,a}(t),
\]
so there exists $t\step{a}t'$ with $[t']_{\eu}=[s']_{\eu}$, hence $s'\eu t'$. The reverse matching condition is symmetric. Therefore $\eu$ is a strong bisimulation.

Conversely, assume (ii). Let $s\eu t$. If an endpoint class $B$ belongs to $\Post_{\eu,a}(s)$, choose $s\step{a}s'$ with $[s']_{\eu}=B$. Bisimulation supplies $t\step{a}t'$ with $s'\eu t'$, so $[t']_{\eu}=B$ and $B\in\Post_{\eu,a}(t)$. Thus one successor-class set is included in the other; the reverse inclusion follows symmetrically.
\end{proof}

\begin{definition}[Canonical endpoint quotient]
When the equivalent conditions hold, define the endpoint quotient transition system
\[
\mathcal T\quot\eu=(S\quot\eu,\Sigma,\step{}_{\eu})
\]
by
\[
A\step{a}_{\eu}B
\quad\Longleftrightarrow\quad
\text{there exist }s\in A\text{ and }t\in B\text{ with }s\step{a}t.
\]
By quotient sufficiency, this is equivalent to requiring that every representative $s\in A$ have some $a$-successor in $B$. Thus the outgoing target classes are independent of the representative used.
\end{definition}

\begin{corollary}[Exact trace preservation by a quotient-sufficient endpoint]
If $\mathsf U$ is quotient-sufficient, then for every $s\in S$,
\[
\Tr_{\mathcal T}(s)=\Tr_{\mathcal T\quot\eu}([s]_{\eu}).
\]
In particular, $\mathsf U$ is trace-sufficient.
\end{corollary}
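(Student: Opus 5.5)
We prove the two inclusions of $\Tr_{\mathcal T}(s)=\Tr_{\mathcal T\quot\eu}([s]_{\eu})$ separately, each by induction on the length of a word $w\in\Sigma^*$. We use the definition of $\step{}_{\eu}$ and the Exact quotient criterion, whose item (ii) lets us treat $\eu$ as a strong bisimulation. The base case $w=\varepsilon$ is trivial on both sides, since every state and every class admits the empty trace.

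For $\Tr_{\mathcal T}(s)\subseteq\Tr_{\mathcal T\quot\eu}([s]_{\eu})$, we project paths. If $s=s_0\step{a_1}s_1\cdots\step{a_n}s_n$, then for each $i$ the definition of $\step{}_{\eu}$ (existence of some representatives) gives $[s_{i-1}]_{\eu}\step{a_i}_{\eu}[s_i]_{\eu}$. Hence $[s]_{\eu}\steps{w}[s_n]_{\eu}$ in the quotient. This direction does not use quotient sufficiency at all.

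For the reverse inclusion, we lift quotient paths to $\mathcal T$ starting from the given representative $s$. We prove by induction on $n$ the following statement: if $A_0\step{a_1}_{\eu}A_1\cdots\step{a_n}_{\eu}A_n$ and $s\in A_0$, then there is a path $s\steps{a_1\cdots a_n}t$ in $\mathcal T$ with $t\in A_n$.

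For the inductive step, the edge $A_0\step{a_1}_{\eu}A_1$ supplies some $u\in A_0$ and $u'\in A_1$ with $u\step{a_1}u'$. But $u$ need not equal $s$, and this is the one genuine point of the proof. Since $s\eu u$ and $\eu$ is a strong bisimulation by the Exact quotient criterion, there is $s\step{a_1}s_1$ with $s_1\eu u'$, so $s_1\in A_1$. Alternatively, we can invoke the remark after the definition of the canonical endpoint quotient that every representative of $A_0$ has an $a_1$-successor in $A_1$. The induction hypothesis applied to $s_1\in A_1$ and the remaining path then finishes the step.

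Combining the two inclusions gives the equality. The "in particular" clause follows directly. If $s\eu t$, then $[s]_{\eu}=[t]_{\eu}$, so
\[
\Tr(s)=\Tr_{\mathcal T\quot\eu}([s]_{\eu})=\Tr(t).
\]
This is exactly trace sufficiency, equivalently $\eu\subseteq\etr$ by the Trace criterion.

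The only step with any content is the representative-switching in the lifting direction. It is handled by the bisimulation property, and the rest is routine bookkeeping.
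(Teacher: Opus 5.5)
Your proof is correct and follows the same route as the paper. Concrete paths project to quotient paths, and quotient paths lift from any representative by the bisimulation matching of $\eu$ given by the Exact quotient criterion; trace sufficiency then follows from $[s]_{\eu}=[t]_{\eu}$, and your explicit inductive lifting statement just spells out the step the paper compresses into one sentence.
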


\begin{proof}
Every concrete path projects to a quotient path. Conversely, use bisimulation step matching inductively along a quotient path to lift each quotient transition from any representative of its source class. Therefore the same finite words are realizable in both systems. If $s\eu t$, both project to the same quotient state and hence have the same trace language.
\end{proof}

\begin{corollary}[Hierarchy]
For every lawful generative transition system,
\[
\text{quotient sufficiency}
\Longrightarrow
\text{trace sufficiency}
\Longrightarrow
\text{enabled sufficiency}.
\]
\end{corollary}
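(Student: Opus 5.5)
The plan is to prove the two implications separately, each by chaining the definitions already fixed in the paper.

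\emph{First implication.} That quotient sufficiency implies trace sufficiency is exactly the last clause of the preceding corollary on exact trace preservation, so I would simply cite it. For completeness I would recall why it holds. By the exact quotient criterion, quotient sufficiency makes $\eu$ a strong bisimulation. Then, by induction on the length of $w$, one shows that if $s\eu t$ and $s\steps{w}s'$, there is some $t'$ with $t\steps{w}t'$ and $s'\eu t'$. Symmetry gives $\Tr(s)=\Tr(t)$.

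\emph{Second implication.} For trace sufficiency implies enabled sufficiency, I would observe that enabled labels are precisely the length-one traces:
\[
\En(s)=\Tr(s)\cap\Sigma ,
\]
where $\Sigma$ is identified with the set of words of length one. Indeed, $a\in\Tr(s)$ iff $s\steps{a}t$ for some $t$, iff $s\step{a}t$ for some $t$. Hence if $s\eu t$, trace sufficiency gives $\Tr(s)=\Tr(t)$. Intersecting both sides with $\Sigma$ then yields $\En(s)=\En(t)$.

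\emph{Main obstacle.} Nothing here is deep. The only step needing care is the inductive lifting of paths in the first implication, and that is already discharged by the earlier corollary. I would remark that the hierarchy holds for arbitrary (possibly infinite, infinitely branching) systems, since no finiteness assumption enters either step. Strictness of the hierarchy is a separate matter, to be shown by counterexamples rather than as part of this statement.
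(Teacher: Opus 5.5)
Your proposal is correct and matches the paper's proof: the first implication is obtained by citing the exact-trace-preservation corollary, and the second by observing that $a\in\En(s)$ exactly when the one-letter word $a$ lies in $\Tr(s)$. Your recollection of the first implication is also sound: you lift paths step by step along $\eu$ as a bisimulation, rather than passing through the quotient system as the cited corollary does, which is slightly more self-contained but not a different route.
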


\begin{proof}
The first implication is the preceding corollary. For the second, $a\in\En(s)$ exactly when the one-letter word $a$ belongs to $\Tr(s)$.
\end{proof}

\section{The Implications Are Strict}

The informal sentence ``endpoint sufficiency holds exactly when endpoint equivalence preserves lawful transition'' is correct only when \emph{sufficiency} means quotient sufficiency. It is too strong for trace sufficiency.

\begin{example}[Enabled-sufficient but not trace-sufficient]
Let the only nontrivial endpoint class be $\{x,y\}$, and let all other states have distinct endpoints. Take transitions
\[
x\step{a}p,
\qquad
y\step{a}q,
\qquad
p\step{b}r,
\]
with no other transitions. Then
\[
\En(x)=\En(y)=\{a\},
\]
but $ab\in\Tr(x)$ while $ab\notin\Tr(y)$. Thus enabled sufficiency does not imply trace sufficiency.
\end{example}

\begin{example}[Trace-sufficient but not quotient-sufficient]
Again let the only nontrivial endpoint class be $\{x,y\}$. Take transitions
\[
x\step{a}p,
\qquad
y\step{a}q,
\]
where $p$ and $q$ are terminal states with distinct visible endpoints. Then
\[
\Tr(x)=\Tr(y)=\{\varepsilon,a\},
\]
so the endpoint projection is trace-sufficient on $\{x,y\}$. However,
\[
\Post_{\eu,a}(x)=\{[p]_{\eu}\}
\neq
\{[q]_{\eu}\}=\Post_{\eu,a}(y).
\]
Hence the endpoint quotient is not representative-independent. Trace sufficiency does not imply quotient sufficiency.
\end{example}

The second example is the minimal warning against conflating a linear-time future language with a branching state description. The two occurrences admit the same label words but move into visibly different endpoint classes.

\section{The Greatest Endpoint-Respecting Bisimulation}

When raw endpoint equivalence is not quotient-sufficient, the correct repair is not necessarily to retain every provenance detail. We seek the coarsest refinement of endpoint classes that supports an exact branching quotient.

\begin{definition}[Endpoint-respecting bisimulation]
A strong bisimulation $R$ is endpoint-respecting when $R\subseteq\eu$.
\end{definition}

Let $\mathcal B_{\mathsf U}$ be the family of all endpoint-respecting strong bisimulations, and define
\[
\approx_{\mathsf U}
\;=\;
\bigcup_{R\in\mathcal B_{\mathsf U}}R.
\]

\begin{theorem}[Canonical maximal safe forgetting]
The relation $\approx_{\mathsf U}$ has the following properties:
\begin{enumerate}[label=(\roman*)]
    \item it is a strong bisimulation;
    \item it is an equivalence relation;
    \item it is contained in endpoint equivalence;
    \item it is the greatest endpoint-respecting strong bisimulation; and
    \item the quotient $\mathcal T\quot\approx_{\mathsf U}$ is the maximally coarse exact branching quotient that preserves visible endpoint distinctions.
\end{enumerate}
\end{theorem}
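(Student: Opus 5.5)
I will establish the five items in order, since each depends on the previous ones, using only the definitions of strong bisimulation and of the family $\mathcal B_{\mathsf U}$.

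For (i), I take $(s,t)\in\approx_{\mathsf U}$, so $(s,t)\in R$ for some $R\in\mathcal B_{\mathsf U}$. Given $s\step{a}s'$, the bisimulation property of $R$ supplies $t\step{a}t'$ with $(s',t')\in R\subseteq\approx_{\mathsf U}$; the reverse clause is symmetric. Hence a union of bisimulations is a bisimulation. Item (iii) is immediate, since every member of $\mathcal B_{\mathsf U}$ lies in $\eu$. Item (iv) then follows from (i) and (iii): $\approx_{\mathsf U}$ is itself endpoint-respecting and contains every endpoint-respecting bisimulation by construction.

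For (ii) I will use three closure facts, each preserving membership in $\mathcal B_{\mathsf U}$.
\begin{enumerate}[label=(\alph*)]
\item The identity relation $\Delta_S$ is a bisimulation and lies in $\eu$, because $\cong_X$ is reflexive. This gives reflexivity.
\item The converse $R^{-1}$ of a bisimulation is a bisimulation, because the two clauses swap; it stays inside $\eu$ by symmetry of $\cong_X$. This gives symmetry.
\item The relational composite $R;R'$ of two bisimulations is a bisimulation, by chaining the matching moves; it stays inside $\eu$ by transitivity of $\cong_X$. This gives transitivity.
\end{enumerate}
In each case the new relation belongs to $\mathcal B_{\mathsf U}$ and is therefore contained in $\approx_{\mathsf U}$. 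Composition is the only step needing a short two-stage argument, and it is routine.

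Item (v) is where the real content lies, and the main work is to make ``maximally coarse exact branching quotient that preserves visible endpoint distinctions'' precise. I will read it as follows. An equivalence $E$ gives an exact branching quotient precisely when $\Post_{E,a}$ is representative-independent on $E$-classes. The argument of the Exact quotient criterion (Theorem, (i)$\Leftrightarrow$(ii)) carries over verbatim with $\eu$ replaced by an arbitrary equivalence $E$, since that proof used only that $\eu$ is an equivalence. So an equivalence gives an exact quotient if and only if it is a strong bisimulation, and it preserves endpoint distinctions if and only if $E\subseteq\eu$.

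With this reading, the admissible equivalences are exactly the equivalence relations in $\mathcal B_{\mathsf U}$. By (ii) and (iv), $\approx_{\mathsf U}$ is one of them and contains all others. Hence every exact endpoint-respecting quotient refines $\mathcal T\quot\approx_{\mathsf U}$, and no strictly coarser admissible quotient exists.

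Finally, I will check that $\mathcal T\quot\approx_{\mathsf U}$ is well defined, with representative-independent outgoing classes, by that same generalized criterion.
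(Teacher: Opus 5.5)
Your proposal is correct and follows the paper's proof essentially step for step. You use the union of endpoint-respecting bisimulations for (i), (iii) and (iv), and closure under identity, converse and composition for (ii). For (v) you observe that the kernel of any exact endpoint-respecting quotient is an endpoint-respecting bisimulation equivalence, and so lies inside $\approx_{\mathsf U}$. Your explicit remark that the exact quotient criterion's proof works verbatim for an arbitrary equivalence $E$ is a useful tightening, since the paper silently invokes that theorem (stated only for $\eu$) in exactly this generalized form.
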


\begin{proof}
If $(s,t)\in\approx_{\mathsf U}$, then $(s,t)$ belongs to some endpoint-respecting bisimulation $R$. Every transition from either state can be matched by a transition to an $R$-related state, hence to an $\approx_{\mathsf U}$-related state. Therefore the union is a bisimulation. It is contained in $\eu$ because every member relation is.

The identity relation is an endpoint-respecting bisimulation, so $\approx_{\mathsf U}$ is reflexive. If $R$ is an endpoint-respecting bisimulation, then so is $R^{-1}$; hence the union is symmetric. If $s\approx_{\mathsf U}t$ and $t\approx_{\mathsf U}u$, choose endpoint-respecting bisimulations $R,Q$ with $sRt$ and $tQu$. The relational composition $R\circ Q$ is a strong bisimulation, and it remains contained in $\eu$ because $\eu$ is transitive. Thus $s\approx_{\mathsf U}u$, proving transitivity.

Greatestness follows from the definition as a union of all endpoint-respecting bisimulations. Since it is a bisimulation equivalence, its quotient is exact by the preceding theorem. Any other exact quotient preserving endpoint distinctions has a kernel that is an endpoint-respecting bisimulation equivalence, and hence is contained in $\approx_{\mathsf U}$. Therefore no such quotient can identify more states.
\end{proof}

\begin{corollary}[Universal property of maximal lawful forgetting]
Let $E\subseteq\eu$ be any strong bisimulation equivalence, and let $\pi_E:S\to S\quot E$ and $\pi_{\mathsf U}:S\to S\quot\approx_{\mathsf U}$ be the quotient maps. Then there is a unique surjective labeled-transition morphism
\[
q_E:S\quot E\longrightarrow S\quot\approx_{\mathsf U}
\]
satisfying $q_E\circ\pi_E=\pi_{\mathsf U}$. Hence every exact endpoint-respecting quotient refines the canonical maximal-forgetting quotient.
\end{corollary}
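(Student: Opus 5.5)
The plan is to define $q_E$ directly on classes by $q_E([s]_E)=[s]_{\approx_{\mathsf U}}$ and then verify the required properties one at a time. The key input is the preceding theorem. Since $E$ is a strong bisimulation contained in $\eu$, it belongs to the family $\mathcal B_{\mathsf U}$. Hence $E\subseteq\approx_{\mathsf U}$ by item (iv). This inclusion is the whole content of ``refines''; everything else is bookkeeping about quotient maps.

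\textbf{Well-definedness.} If $[s]_E=[s']_E$, then $sEs'$, so $s\approx_{\mathsf U}s'$ and the two images agree. The identity $q_E\circ\pi_E=\pi_{\mathsf U}$ then holds by construction.

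\textbf{Surjectivity.} Every $\approx_{\mathsf U}$-class is $[s]_{\approx_{\mathsf U}}=q_E([s]_E)$.

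\textbf{Uniqueness.} $\pi_E$ is surjective, so any $q'$ with $q'\circ\pi_E=\pi_{\mathsf U}$ must agree with $q_E$ on every class $\pi_E(s)$.

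\textbf{Morphism property.} Both quotients carry the transitions defined as in the canonical endpoint quotient: $A\step{a}B$ iff some representative of $A$ has an $a$-step into $B$. These quotients are representative-independent because $E$ and $\approx_{\mathsf U}$ are bisimulation equivalences, by the exact quotient criterion applied to a general equivalence. If $[s]_E\step{a}[t]_E$, pick $s_0Es$ and $t_0Et$ with $s_0\step{a}t_0$. Then $s_0\approx_{\mathsf U}s$ and $t_0\approx_{\mathsf U}t$, so $[s]_{\approx_{\mathsf U}}\step{a}[t]_{\approx_{\mathsf U}}$. Hence $q_E$ preserves transitions.

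\textbf{Main obstacle.} Fixing what ``labeled-transition morphism'' means is the step needing care. If only transition preservation is meant, the check above suffices. If the intended notion also reflects transitions (a functional bisimulation, i.e.\ zig-zag), I would additionally show the following. Suppose $q_E([s]_E)\step{a}C$. Then some $s_1\approx_{\mathsf U}s$ has $s_1\step{a}t_1$ with $[t_1]_{\approx_{\mathsf U}}=C$. Because $\approx_{\mathsf U}$ is a bisimulation, $s$ itself has $s\step{a}t$ with $t\approx_{\mathsf U}t_1$. Thus $[s]_E\step{a}[t]_E$, and $q_E([t]_E)=C$.

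I would state the stronger version, since it also shows that $q_E$ preserves traces. Finally, ``every exact endpoint-respecting quotient refines the canonical one'' is just a restatement of $E\subseteq\approx_{\mathsf U}$ together with the factorization $q_E\circ\pi_E=\pi_{\mathsf U}$.
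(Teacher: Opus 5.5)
Your proof is correct and follows the paper's route: greatestness gives $E\subseteq\approx_{\mathsf U}$, then you define $q_E([s]_E)=[s]_{\approx_{\mathsf U}}$ and check well-definedness, surjectivity, transition preservation, and uniqueness (the last from surjectivity of $\pi_E$). Your extra zig-zag step goes beyond the paper's claim and is also correct, since it uses only that $\approx_{\mathsf U}$ is a bisimulation; plain transition preservation under the existential quotient definition does not need the representative-independence you invoke.
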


\begin{proof}
Because $E$ is an endpoint-respecting bisimulation, greatestness gives $E\subseteq\approx_{\mathsf U}$. Define
\[
q_E([s]_E)=[s]_{\approx_{\mathsf U}}.
\]
The inclusion makes this map well-defined. It is surjective, preserves labeled transitions by construction of quotient transitions, and is the only map commuting with the two quotient projections.
\end{proof}

\begin{corollary}[Branching refinement is trace-safe]
\[
\approx_{\mathsf U}
\;\subseteq\;
E^{\mathrm{tr}}_{\mathsf U}
\;=\;
\eu\cap\etr.
\]
\end{corollary}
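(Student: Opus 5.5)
The inclusion splits into two parts, $\approx_{\mathsf U}\subseteq\eu$ and $\approx_{\mathsf U}\subseteq\etr$, and I will prove each separately. The first is already item (iii) of the theorem on canonical maximal safe forgetting. So the content lies in the second: any two states related by the strong bisimulation $\approx_{\mathsf U}$ have the same finite trace language. Once both inclusions hold, intersecting gives the claim, by the definition of $E^{\mathrm{tr}}_{\mathsf U}$. Equivalently, since $\approx_{\mathsf U}$ is an equivalence contained in $\eu$, it then satisfies both conditions of the proposition on the coarsest endpoint-respecting trace-safe refinement, and that proposition's maximality gives the inclusion.

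For the second inclusion I will prove a general lemma: if $R$ is any strong bisimulation and $sRt$, then $\Tr(s)\subseteq\Tr(t)$. I will argue by induction on the length of a word $w$, proving the stronger statement that whenever $s\steps{w}s'$ there is $t'$ with $t\steps{w}t'$ and $s'Rt'$. The base case $w=\varepsilon$ is immediate with $t'=t$. For the inductive step $w=av$, I decompose $s\step{a}s_1\steps{v}s'$, use clause (1) of the bisimulation definition to obtain $t\step{a}t_1$ with $s_1Rt_1$, and then apply the induction hypothesis to the pair $(s_1,t_1)$ and the shorter word $v$. The reverse inclusion $\Tr(t)\subseteq\Tr(s)$ follows in the same way from clause (2), or by applying the lemma to $R^{-1}$, which is also a bisimulation. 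Taking $R=\approx_{\mathsf U}$, which is a strong bisimulation by item (i) of the theorem, gives $\approx_{\mathsf U}\subseteq\etr$.

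Alternatively, I could route through the corollary on exact trace preservation. The relation $\approx_{\mathsf U}$ plays the role of a quotient-sufficient equivalence, so trace languages are constant on its classes. That corollary, however, is stated for $\eu$ specifically, and re-proving it for a general bisimulation equivalence is exactly the induction above, so I will give the direct induction.

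No step is a genuine obstacle. The only point needing care is to carry the relational invariant $s'Rt'$ through the induction, rather than merely the existence of a matching path. Once that invariant is in place, the whole argument is routine.
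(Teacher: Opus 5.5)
Your proof is correct and follows the paper's own argument: $\approx_{\mathsf U}\subseteq\eu$ because the relation is endpoint-respecting, and $\approx_{\mathsf U}\subseteq\etr$ because bisimilar states have equal finite trace languages. The paper only cites the second fact, whereas you prove it by the standard induction on word length (carrying the invariant $s'Rt'$ across all pairs of $R$), which is the appropriate way to make it rigorous.
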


\begin{proof}
The canonical relation is endpoint-respecting, and bisimilar states have the same finite traces.
\end{proof}

Thus there are two canonical safe-forgetting partitions:
\begin{center}
\begin{tabular}{@{}lll@{}}
\toprule
Observation level & Canonical equivalence & Preserved structure \\
\midrule
Linear time & $\eu\cap\etr$ & finite rule-label traces \\
Branching time & $\approx_{\mathsf U}$ & exact labeled transition quotient \\
\bottomrule
\end{tabular}
\end{center}

The branching partition may be strictly finer because it preserves where choices occur, not only which words can eventually be realized.

\subsection{Greatest-fixed-point form}

For relations $R\subseteq S\times S$, define the monotone operator
\[
\Phi_{\mathsf U}(R)=\eu\cap\Bigl\{(s,t):
\begin{array}{l}
\text{for every }a\text{ and }s\step{a}s',\text{ some }t\step{a}t'\text{ has }(s',t')\in R,\\
\text{and conversely for every }a\text{ and }t\step{a}t'
\end{array}
\Bigr\}.
\]

\begin{proposition}[Coinductive characterization]
$\approx_{\mathsf U}$ is the greatest fixed point of $\Phi_{\mathsf U}$.
\end{proposition}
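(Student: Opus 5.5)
The argument is the standard Knaster--Tarski one, using the previous theorem on canonical maximal safe forgetting for most of the work. Because $\Phi_{\mathsf U}$ is monotone on the complete lattice of relations on $S$, its greatest fixed point $\nu\Phi_{\mathsf U}$ exists and equals the union of all post-fixed points, namely all $R$ with $R\subseteq\Phi_{\mathsf U}(R)$. So it suffices to prove two things: that $\approx_{\mathsf U}$ is a fixed point, and that every post-fixed point lies inside $\approx_{\mathsf U}$.

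\textbf{Post-fixed points are exactly the endpoint-respecting bisimulations.} Suppose $R\subseteq\Phi_{\mathsf U}(R)$. Then every pair in $R$ lies in $\eu$, and it satisfies both transfer clauses with respect to $R$. Hence $R$ is an endpoint-respecting strong bisimulation, so $R\in\mathcal B_{\mathsf U}$ and $R\subseteq\approx_{\mathsf U}$. Conversely, if $R\in\mathcal B_{\mathsf U}$, the bisimulation clauses together with $R\subseteq\eu$ give $R\subseteq\Phi_{\mathsf U}(R)$. The union of all post-fixed points is therefore $\bigcup\mathcal B_{\mathsf U}=\approx_{\mathsf U}$, and Knaster--Tarski then gives $\nu\Phi_{\mathsf U}=\approx_{\mathsf U}$ directly.

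\textbf{Explicit fixed-point check.} For completeness, this can also be verified without quoting the full Knaster--Tarski statement. By the previous theorem, $\approx_{\mathsf U}$ is itself an endpoint-respecting bisimulation, so $\approx_{\mathsf U}\subseteq\Phi_{\mathsf U}(\approx_{\mathsf U})$. For the reverse inclusion, apply monotonicity to get $\Phi_{\mathsf U}(\approx_{\mathsf U})\subseteq\Phi_{\mathsf U}(\Phi_{\mathsf U}(\approx_{\mathsf U}))$. This shows that $\Phi_{\mathsf U}(\approx_{\mathsf U})$ is a post-fixed point, hence an endpoint-respecting bisimulation, hence contained in $\approx_{\mathsf U}$. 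So $\approx_{\mathsf U}$ is a fixed point. Any other fixed point is in particular post-fixed, so it is contained in $\approx_{\mathsf U}$, which makes $\approx_{\mathsf U}$ the greatest.

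\textbf{Remaining checks.} The only step needing care is monotonicity of $\Phi_{\mathsf U}$, which the paper asserts but does not prove. It holds because enlarging $R$ only makes the existential witness conditions $(s',t')\in R$ easier to satisfy, while the intersection with $\eu$ does not depend on $R$. No finiteness or image-finiteness assumption is needed, since we work with greatest fixed points via Knaster--Tarski rather than iterating $\Phi_{\mathsf U}$ from the full relation. That iteration is the only place image-finiteness would matter, and it belongs to the later partition-refinement procedure.
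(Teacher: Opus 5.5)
Your proof is correct and follows the same route as the paper: you identify the post-fixed points of $\Phi_{\mathsf U}$ with the endpoint-respecting bisimulations, then invoke Knaster--Tarski so that the greatest fixed point is their union $\approx_{\mathsf U}$. Your explicit fixed-point check and your verification of monotonicity are sound additions that the paper leaves implicit.
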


\begin{proof}
A relation $R$ is an endpoint-respecting bisimulation exactly when $R\subseteq\Phi_{\mathsf U}(R)$. By the Knaster--Tarski principle, the greatest post-fixed point of the monotone operator is its greatest fixed point. This greatest post-fixed point is precisely the union of all endpoint-respecting bisimulations.
\end{proof}

This form makes the criterion available for infinite systems even when no finite refinement algorithm terminates.

\section{Finite Partition Refinement}

Assume now that $S$ and the set of labels occurring on transitions are finite. Let $P_0$ be the partition of $S$ into endpoint-equivalence classes. For a partition $P$, define
\[
\Post_{P,a}(s)=\{B\in P: \exists t\in B,\ s\step{a}t\}
\]
and the signature
\[
\operatorname{sig}_P(s)
=
\left(
[s]_P,
\bigl(\Post_{P,a}(s)\bigr)_{a\in\Sigma}
\right).
\]
Let $F(P)$ be the partition whose blocks consist of states with equal $P$-signatures. Including $[s]_P$ in the signature ensures that $F(P)$ only splits blocks and never merges them.

Define iteratively
\[
P_{n+1}=F(P_n).
\]

\begin{theorem}[Termination and correctness of endpoint-initialized refinement]
For finite $S$, the sequence $(P_n)$ stabilizes after finitely many strict refinements. Its stable partition $P_\infty$ is exactly the partition into $\approx_{\mathsf U}$-classes.
\end{theorem}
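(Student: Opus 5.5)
The plan has three parts: termination from monotone refinement on a finite set, soundness by showing the stable partition is an endpoint-respecting bisimulation, and completeness by an invariant that $\approx_{\mathsf U}$ refines every $P_n$.

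\emph{Termination.} Because $[s]_P$ is a component of $\operatorname{sig}_P(s)$, each block of $F(P)$ lies inside a block of $P$. So $P_{n+1}$ refines $P_n$, and the number of blocks $|P_n|$ is nondecreasing and bounded by $|S|$. If $|P_{n+1}|=|P_n|$, then, since $P_{n+1}$ refines $P_n$, the two partitions coincide. From then on every iterate is the same partition, because $F$ is a function of the partition alone. Hence there are at most $|S|-|P_0|$ strict refinements before the sequence reaches a fixed partition $P_\infty$ with $F(P_\infty)=P_\infty$.

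\emph{Soundness ($P_\infty\subseteq\approx_{\mathsf U}$).} Let $R$ be the equivalence relation whose classes are the blocks of $P_\infty$. Then $R\subseteq\eu$, since $P_\infty$ refines $P_0$. Now take $sRt$. The fixed-point equation $F(P_\infty)=P_\infty$ gives $\Post_{P_\infty,a}(s)=\Post_{P_\infty,a}(t)$ for every $a$. Suppose $s\step{a}s'$. Then the block of $s'$ lies in $\Post_{P_\infty,a}(t)$, so some $t\step{a}t'$ has $t'$ in that block, i.e.\ $s'Rt'$. The converse clause holds by symmetry. Thus $R$ is an endpoint-respecting bisimulation, and by the Canonical maximal safe forgetting theorem, $R\subseteq\approx_{\mathsf U}$.

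\emph{Completeness ($\approx_{\mathsf U}\subseteq P_\infty$).} I prove by induction on $n$ that $s\approx_{\mathsf U}t$ implies $s$ and $t$ lie in the same block of $P_n$.
\begin{itemize}
\item The base case $n=0$ is item (iii) of the same theorem: $\approx_{\mathsf U}$ is contained in endpoint equivalence.
\item For the inductive step, assume the claim for $P_n$ and let $s\approx_{\mathsf U}t$. The first signature components agree by hypothesis. Let $B\in\Post_{P_n,a}(s)$, witnessed by $s\step{a}s'$ with $s'\in B$. Since $\approx_{\mathsf U}$ is a bisimulation, there is $t\step{a}t'$ with $s'\approx_{\mathsf U}t'$. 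By the hypothesis $t'\in B$, so $B\in\Post_{P_n,a}(t)$. The reverse inclusion is symmetric. Hence $\operatorname{sig}_{P_n}(s)=\operatorname{sig}_{P_n}(t)$, so $s$ and $t$ share a block of $P_{n+1}$.
\end{itemize}
Applying this at the stabilization index gives $\approx_{\mathsf U}\subseteq P_\infty$, and the two inclusions together give equality.

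I do not expect a serious obstacle. The one step needing care is the termination argument: it requires that $F$ only splits blocks, so that block counts strictly increase until the sequence stops. This is exactly what including $[s]_P$ in the signature guarantees. Finiteness of the label set is not needed for correctness; it matters only for effective computation of signatures.
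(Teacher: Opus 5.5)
Your proof is correct and follows essentially the same route as the paper: termination by counting blocks, soundness by showing the stable partition is an endpoint-respecting bisimulation, and completeness by an induction showing that bisimilar states are never split. The only cosmetic difference is that you run the completeness induction directly on $\approx_{\mathsf U}$, using that it is already known to be an endpoint-respecting bisimulation, whereas the paper runs it for an arbitrary endpoint-respecting bisimulation $R$ and then appeals to greatestness.
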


\begin{proof}
Every strict refinement increases the number of blocks, which is bounded by $|S|$. Hence stabilization occurs after at most $|S|-|P_0|$ strict refinement rounds.

At stability, states in the same block have, for every label, identical sets of successor blocks. The equivalence relation induced by $P_\infty$ is therefore a strong bisimulation and refines endpoint equivalence.

For maximality, let $R$ be any endpoint-respecting bisimulation equivalence. We prove by induction that $R$ is contained in the equivalence induced by every $P_n$. The base case holds because $R\subseteq\eu$. Assume it holds for $P_n$, and let $sRt$. Then $s$ and $t$ lie in the same $P_n$-block. If $s\step{a}s'$, bisimulation supplies $t\step{a}t'$ with $s'Rt'$, so $s'$ and $t'$ lie in the same $P_n$-block. The converse direction is symmetric. Hence $s$ and $t$ have equal $P_n$-signatures and remain together in $P_{n+1}$. Thus every endpoint-respecting bisimulation is contained in the stable relation.

The stable relation is itself an endpoint-respecting bisimulation and contains every such relation, so it equals $\approx_{\mathsf U}$.
\end{proof}

This is a direct endpoint-initialized form of partition refinement, a standard technique for computing coarsest stable partitions and bisimulation quotients in finite transition systems \cite{paigetarjan1987}.

\begin{remark}[What the algorithm retains]
The algorithm does not retain provenance because it is provenance. It retains a provenance distinction only when that distinction changes the labeled successor-block structure, possibly after several rounds of propagation. A difference that never becomes behaviorally visible is quotiented away.
\end{remark}

\section{Application to Provenance-Decorated NRRG}

Consider the pointed typed graph system from the provenance-decorated NRRG framework \cite{carr2026nrrg}. Its visible endpoint projection forgets the provenance DAG and the inherited set of recombinant join-interface vertices. The relevant witness is unary, so the focal-parent convention for recombinant events plays no role in the separating transition.

The system contains two occurrences $\widetilde G$ and $\widetilde H$ whose visible pointed typed endpoints are both
\[
I\longrightarrow A\longrightarrow B.
\]
In $\widetilde G$, this endpoint was obtained without recombinant ancestry near the expandable vertex. Hence the recombinant ancestry radius is infinite and the boundary-touching rule $R_B$ is lawful. In $\widetilde H$, the visible interface is a recombinant join, the expandable vertex has radius $1$, and the same rule is unlawful. Figure~\ref{fig:nrrg-witness} displays the complete one-step distinction. The dashed halo is provenance decoration and is erased by $\mathsf U$; the visible endpoint graph itself is unchanged.

\begin{figure}[ht]
\centering
\begin{tikzpicture}[
  >=Latex,
  node distance=11mm,
  vnode/.style={draw,circle,minimum size=7.5mm,inner sep=0pt,font=\small},
  pmark/.style={draw,dashed,circle,minimum size=12mm,inner sep=0pt},
  lab/.style={font=\small,align=center}
]
\node[lab] (gt) at (-3.8,1.25) {$\widetilde G$: recursive-only occurrence};
\node[vnode] (gi) at (-5.0,0.35) {$I$};
\node[vnode,right=of gi] (ga) {$A$};
\node[vnode,right=of ga] (gb) {$B$};
\draw[->] (gi) -- (ga);
\draw[->] (ga) -- (gb);
\node[lab] (gr) at (-3.8,-0.65) {$J(\widetilde G)=\varnothing,\quad \rho_C(A)=\infty$};
\node[lab] at (-3.8,-1.35) {$R_B\ \checkmark\quad$ enabled};

\node[lab] (ht) at (3.8,1.25) {$\widetilde H$: recombinant occurrence};
\node[vnode] (hi) at (2.6,0.35) {$I$};
\node[pmark] at (hi) {};
\node[vnode,right=of hi] (ha) {$A$};
\node[vnode,right=of ha] (hb) {$B$};
\draw[->] (hi) -- (ha);
\draw[->] (ha) -- (hb);
\node[lab] (hr) at (3.8,-0.65) {$J(\widetilde H)=\{I\},\quad \rho_C(A)=1$};
\node[lab] at (3.8,-1.35) {$R_B\ \times\quad$ forbidden};

\draw[densely dotted] (0,-1.65) -- (0,1.55);
\end{tikzpicture}
\caption{Same visible endpoint, different enabled future. The dashed halo marks a recombinant join retained in provenance but forgotten by the endpoint projection.}
\label{fig:nrrg-witness}
\end{figure}

Consequently,
\[
\mathsf U(\widetilde G)\cong_X\mathsf U(\widetilde H),
\qquad
R_B\in\En(\widetilde G),
\qquad
R_B\notin\En(\widetilde H).
\]

\begin{proposition}[NRRG endpoint failure occurs at the enabled level]
The NRRG endpoint projection is not enabled-sufficient, and consequently is neither trace-sufficient nor quotient-sufficient.
\end{proposition}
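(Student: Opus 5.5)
The plan is to produce a single counterexample pair to enabled sufficiency and then push the failure upward through the hierarchy corollary by contraposition. The pair is the one already exhibited for the NRRG system, the occurrences $\widetilde G$ and $\widetilde H$ of Figure~\ref{fig:nrrg-witness}.

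First I would check the hypothesis side, $\widetilde G\eu\widetilde H$. The projection $\mathsf U$ erases the provenance DAG and the inherited join-interface set $J$. Both occurrences therefore map to the pointed typed graph $I\to A\to B$ with the same typing and the same distinguished vertex. The identity map on these three vertices is then a witness of $\mathsf U(\widetilde G)\cong_X\mathsf U(\widetilde H)$.

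Second, I would verify the conclusion side, $R_B\in\En(\widetilde G)\setminus\En(\widetilde H)$. This is the step that needs care. Since the transition relation contains only lawful steps, it is not enough to look at the visible graph. I would have to recall from \cite{carr2026nrrg} the admissibility law governing $R_B$: a boundary-touching rewrite at the expandable vertex is lawful only when the recombinant ancestry radius $\rho_C$ there exceeds the threshold required by the rule.
\begin{itemize}
\item In $\widetilde G$ we have $J(\widetilde G)=\varnothing$, so $\rho_C(A)=\infty$. The threshold is met, and the resulting rewrite gives an edge $\widetilde G\step{R_B}t$ for some $t$.
\item In $\widetilde H$ we have $J(\widetilde H)=\{I\}$, and $I$ is adjacent to $A$, so $\rho_C(A)=1$. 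This falls below the threshold, so no lawful event with label $R_B$ leaves $\widetilde H$.
\end{itemize}
The step is unary, so the focal-parent convention does not affect the label. This is where I expect the main obstacle: confirming that the radius threshold of the earlier paper really places $1$ on the forbidden side and $\infty$ on the permitted side, and that no other rule instance carries the same label $R_B$.

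Third, this establishes $\widetilde G\eu\widetilde H$ together with $\En(\widetilde G)\neq\En(\widetilde H)$, so $\mathsf U$ is not enabled-sufficient. By the Hierarchy corollary, quotient sufficiency implies trace sufficiency, which implies enabled sufficiency. Taking the contrapositive, $\mathsf U$ is neither trace-sufficient nor quotient-sufficient.

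For completeness I could also name the failures directly:
\begin{itemize}
\item the one-letter word $R_B$ lies in $\Tr(\widetilde G)$ but not in $\Tr(\widetilde H)$;
\item $\Post_{\eu,R_B}(\widetilde G)$ is nonempty while $\Post_{\eu,R_B}(\widetilde H)=\varnothing$.
\end{itemize}
The second observation also shows that the refinement procedure, started from $P_0$, separates the two occurrences already in $P_1$.
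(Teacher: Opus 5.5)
Your proposal is correct and is essentially the paper's proof. The pair $\widetilde G,\widetilde H$ of Figure~\ref{fig:nrrg-witness} gives $\widetilde G\eu\widetilde H$ with $R_B\in\En(\widetilde G)\setminus\En(\widetilde H)$, and the hierarchy corollary, read contrapositively, gives the trace and quotient failures; your direct witnesses ($R_B\in\Tr(\widetilde G)\setminus\Tr(\widetilde H)$, and $\Post_{\eu,R_B}(\widetilde H)=\varnothing\neq\Post_{\eu,R_B}(\widetilde G)$) are a harmless extra check that the paper leaves to the hierarchy.
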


\begin{proof}
The endpoint-equivalent occurrences in Figure~\ref{fig:nrrg-witness} have different enabled sets. The hierarchy theorem gives the remaining failures.
\end{proof}

\begin{corollary}[Immediate separation by refinement]
When partition refinement is initialized by visible NRRG endpoint classes, $\widetilde G$ and $\widetilde H$ are separated in the first refinement round.
\end{corollary}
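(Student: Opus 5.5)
The plan is to unwind the definition of the refinement operator $F$ at the very first step, $P_1=F(P_0)$, and show that $\widetilde G$ and $\widetilde H$ receive different $P_0$-signatures. Since $P_0$ is the partition into visible endpoint classes and $\mathsf U(\widetilde G)\cong_X\mathsf U(\widetilde H)$, the two occurrences lie in the same $P_0$-block. The first components of $\operatorname{sig}_{P_0}(\widetilde G)$ and $\operatorname{sig}_{P_0}(\widetilde H)$ therefore agree, and any separation must come from the successor-block components.

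The key step compares the components indexed by the label $R_B$. Because $R_B\in\En(\widetilde G)$, there is some transition $\widetilde G\step{R_B}t$. Hence $\Post_{P_0,R_B}(\widetilde G)$ contains the block $[t]_{P_0}$ and is nonempty. Because $R_B\notin\En(\widetilde H)$, there is no $R_B$-transition out of $\widetilde H$, so $\Post_{P_0,R_B}(\widetilde H)=\varnothing$. The two signatures thus differ in their $R_B$-coordinate, and by definition of $F$ the states lie in different blocks of $P_1$. This is the general observation that a difference in enabled sets always shows up at round one: $\Post_{P,a}(s)\neq\varnothing$ exactly when $a\in\En(s)$, whatever the partition $P$.

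I would add two remarks. First, separation cannot occur at round zero, since $P_0$ places the two occurrences together. Hence ``first refinement round'' is sharp. Second, because refinement only splits blocks, the separation persists into $P_\infty$. By the termination and correctness theorem, this gives $\widetilde G\not\approx_{\mathsf U}\widetilde H$.

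The only point needing care is the finiteness hypothesis of the refinement theorem. The corollary concerns just the first round, and computing $P_1$ on these two states needs only their own outgoing transitions. So I would phrase the argument for $F(P_0)$ directly, restricting to a finite subsystem containing the relevant states if needed. Alternatively I would note that the signature comparison is pointwise and needs no finiteness. I do not expect a genuine obstacle beyond stating this scope correctly.
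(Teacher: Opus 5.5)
Your argument is correct and matches the paper's proof: both place the separation in the $R_B$-coordinate of the $P_0$-signatures, where $\widetilde G$ has a nonempty successor-block set and $\widetilde H$ has an empty one. Your extra remarks are sound additions that the paper leaves implicit: separation cannot happen earlier than round one, the split persists into $P_\infty$, and the first-round signature comparison is pointwise and needs no finiteness.
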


\begin{proof}
Their signatures differ because their sets of successor blocks under $R_B$ differ: one has an $R_B$ successor and the other has none.
\end{proof}

The provenance theorem therefore supplies more than a counterexample to endpoint sufficiency. It identifies a pair that every exact endpoint-respecting behavioral abstraction is forced to distinguish.

\section{Interpretation}

\subsection{The corrected dependency statement}

The motivating statement can now be written without ambiguity:
\[
\boxed{
\begin{array}{c}
\text{endpoint trace sufficiency}
\end{array}
\iff
\eu\subseteq\etr
}
\]
and
\[
\boxed{
\begin{array}{c}
\text{endpoint quotient sufficiency}
\end{array}
\iff
\eu\text{ is a strong bisimulation equivalence.}
}
\]

Thus ``endpoint equivalence preserves lawful transition'' is the exact criterion for an exact branching quotient, not for every weaker notion of future agreement.

\subsection{Provenance as operational residue}

A provenance-decorated state may contain far more history than future behavior requires. The endpoint projection forgets all of it and can therefore forget too much. The canonical relations identify the intermediate target:

\begin{itemize}
    \item $E^{\mathrm{tr}}_{\mathsf U}$ retains exactly enough distinction to preserve finite lawful traces while respecting endpoints;
    \item $\approx_{\mathsf U}$ retains exactly enough distinction to preserve the full labeled branching quotient while respecting endpoints.
\end{itemize}

The difference between raw endpoint classes and these refined classes is the \emph{operational residue} of provenance: historical information that remains active in the future transition space.

\subsection{Abstraction and exactness}

The endpoint projection is an abstraction map. In abstract-interpretation language, the problem is not merely whether the abstraction is sound as an over-approximation, but whether it is exact for the chosen semantics \cite{cousot1977}. A raw quotient that unions transitions from all representatives is always constructible, but it can create behavior unavailable from a particular occurrence. Quotient sufficiency is precisely the condition preventing that representative-dependent inflation.

\subsection{Dependencies, logs, and lawful futures}

The formal consequence extends beyond graph ancestry. A dependency log matters whenever two artifacts with the same visible content have different lawful next operations because of their derivational histories. Examples may include software states, evidence chains, model lineages, legal instruments, scientific claims, or institutional procedures. The present theorems do not establish any such application automatically. They provide the test each application must pass:

\begin{quote}
Does endpoint equivalence lie inside the behavioral equivalence demanded by the application, and what is the coarsest endpoint-respecting refinement that restores it?
\end{quote}

\section{Scope and Further Work}

The current results concern finite traces and strong labeled transitions. Several extensions are immediate research directions.

\begin{enumerate}[label=(\arabic*)]
    \item \emph{Infinite behavior.} One may compare infinite traces, termination, divergence, fairness, or productivity. Different observations yield different safe-forgetting equivalences.
    \item \emph{Weak transitions.} Systems with silent administrative steps may require weak or branching bisimulation rather than strong bisimulation.
    \item \emph{True multi-parent semantics.} Recombinant generation can be represented by labeled hyperedges rather than focal-parent transitions. The correct quotient notion should then match entire parent interfaces, and the resulting sufficiency relation may differ from the focal-parent encoding studied here.
    \item \emph{Probabilistic and weighted generation.} If transitions carry probabilities, costs, or evidence weights, exact quotienting requires corresponding weighted bisimulation or lumpability conditions.
    \item \emph{Decidability under admissibility laws.} For infinite NRRG systems, the computability of $\approx_{\mathsf U}$ depends on how the admissibility law and provenance predicates are represented.
    \item \emph{Minimal provenance encodings.} The quotient partitions determine which occurrences must remain distinct, but not the most compact data structure encoding those distinctions.
\end{enumerate}

No universal claim about cultural, biological, linguistic, or computational evolution follows from the abstract results alone. Each domain must define its states, endpoint projection, labels, admissibility law, and required observation level.

\section{Conclusion}

A visible endpoint is not automatically a state. It is a projection, and the legitimacy of forgetting provenance depends on what future behavior the projection must preserve.

This paper separated three levels of sufficiency. Equality of enabled labels is weakest. Equality of finite trace languages is stronger and holds exactly when endpoint equivalence is contained in trace equivalence. Exact endpoint quotienting is stronger still and holds exactly when endpoint equivalence is a strong bisimulation equivalence. The implications are strict.

When endpoint equivalence fails, one need not choose between retaining the complete history and discarding it wholesale. The intersection $\eu\cap\etr$ gives the coarsest endpoint-respecting trace-safe refinement. The greatest endpoint-respecting bisimulation $\approx_{\mathsf U}$ gives the maximally coarse exact branching quotient, and every other exact endpoint-respecting quotient refines it. In finite systems, it can be computed by partition refinement initialized at the visible endpoint classes.

The central result is therefore a criterion of lawful forgetting:

\begin{quote}
Provenance may be erased exactly across those occurrence distinctions that do not alter the chosen lawful future semantics. What remains after maximal safe erasure is not historical decoration. It is generative state.
\end{quote}

\end{document}